\providecommand{\U}[1]{\protect\rule{.1in}{.1in}}
\newtheorem{theorem}{Theorem}
\newenvironment{proof}[1][Proof]{\noindent\textbf{#1.} }{\ \rule{0.5em}{0.5em}}
\begin{document}
\title{Quantum corrections to gravity}
\author{Yukio Tomozawa}
\date{\today }

\begin{abstract}
This paper revisits quantum corrections to gravity. It was shown previously by
other authors that quantum field theories in curved space time provide
quadratic curvature forms as quantum corrections to gravity in a conformally
flat metric. Application to a spherically symmetric and static (SSS) metric
shows that only the Gauss Bonnet combination (GB) yields the correct
expression. Using a variational method, the author shows that the metric he
obtained in 1985 as an example in a simplified case was indeed the exact
solution for a SSS metric. This proves that gravity becomes repulsive at short
distances by quantum corrections.

\end{abstract}

\pacs{04.70.-s, 95.85.Pw, 95.85.Ry, 98.54.CmHEP/123-qed}
\maketitle

\section{Introduction}

Quantum field thories on a curved space time were extensively studied in
1970-80 in the case of a conformally flat (CF) metric, and it was concluded
that quantum corrections to gravity are expressed as quadratic curvature forms
in that metric\cite{bd}. The application of that formalism to the Friedman
Walker (FW) metric gave the interesting observation that a collapsed universe
will bouce back to an expansion in some cases\cite{anderson}. This provides a
hint that quantum corrections to gravity might yield a repulsive component at
short distances. The author studied quantum corrections for a SSS metric and
obtained a repulsive force at short distance in the special case of a Gauss
Bonnet (GB) combination, and also by a numerical computation. Based on this
analysis, the author proposed a model for high energy cosmic rays produced
from AGN in 1985\cite{cr1}, \cite{cr2}, \cite{cr3}. This turns out to predict
recent data from the Pierre Auger Project that suggests a possible correlation
between high energy cosmic rays and locations of AGN\cite{auger}. Further
discussion of the model yielded a new mass scale at the knee energy of the
cosmic ray spectrum and a prediction for the mass of a dark matter particle,
which is supported by recent data from HESS\cite{evidence}, \cite{agncr},
\cite{hess}. Because of successful predictions that agree with observational
reality, it is important to solidify a theoretical basis for the model. From
this point of view, the author has revisited the subject of quantum
corrections on gravity. Using an explicit form for a SSS metric, it is shown
that among proposed expressions in a conformally flat metric, only the Gauss
Bonnet combination satisfies the finiteness of renormalization. This implies
that the metric obtained by the author in 1985 is in fact the rigorous
solution for quantum corrections to gravity and the repulsive nature of
gravity at short distances has been established. A derivation of the final
result by a variational method is presented.

\section{Quantum corrections in a SSS metric}

Extensive studies of quantum corrections to gravity in a CF metric can be
summarized \cite{bd}, \cite{cr3} as%
\begin{equation}
H^{\mu\nu}=\frac{1}{(n-4)}\frac{1}{\sqrt{\left\vert g\right\vert }}%
\frac{\delta}{\delta g_{\mu\nu}}\int F\sqrt{\left\vert g\right\vert }d^{n}x,
\label{qcorr}%
\end{equation}
where%
\begin{align}
F  &  =AR^{\alpha\beta\gamma\delta}R_{\alpha\beta\gamma\delta}+BR^{\alpha
\beta}R_{\alpha\beta}+CR^{2}\\
&  =f_{0}+(n-4)f_{1}+O((n-4)^{2}), \label{qcorr2}%
\end{align}
and A, B and C are constants. It is well known that the GB combination%
\begin{equation}
A=1,B=-4,C=1 \label{gb}%
\end{equation}
is finite for any metric. For later reference, the GB combination is shown
explicitly as%
\begin{equation}
GB=R^{\alpha\beta\gamma\delta}R_{\alpha\beta\gamma\delta}-4R^{\alpha\beta
}R_{\alpha\beta}+R^{2}.
\end{equation}
For a CF metric, the combinations%
\begin{equation}
(A,B,C)=(0,1,0)
\end{equation}
and%
\begin{equation}
(A,B,C)=(0,0,1/3)
\end{equation}
give finite values and contribute to a quantum correction term, although the
two yield an identical correction in a CF metric. (It will be shown later that
the latter is an erroneous statement.) Using an explicit expression for
curvature in n dimensions, I will examine this property for a SSS metric in n
dimensions%
\begin{equation}
ds^{2}=e^{\nu}dt^{2}-e^{\lambda}dr^{2}-r^{2}d\Omega^{2}, \label{sss}%
\end{equation}
where $\nu$ and $\lambda$ are functions of the radial coordinate, $r$, and the
angular part can be expressed symbolically as%
\begin{equation}
d\Omega^{2}=d\omega_{1}^{2}+d\omega_{2}^{2}+\cdot\cdot\cdot\cdot+d\omega
_{n-2}^{2}.
\end{equation}

\begin{theorem}
For any metric, the $F$-term for quantum corrections is a quadratic curvature form.
\end{theorem}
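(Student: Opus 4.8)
The plan is to establish the result from three structural inputs: general covariance, the locality of the ultraviolet divergences of the one-loop effective action, and dimensional analysis. First I would argue that the quantum correction term descends from the divergent part of the one-loop effective action $W[g]$ in curved spacetime. Diffeomorphism invariance forces $W$ to be a scalar functional of the metric, so its divergent piece must be expressible as $\int F\sqrt{|g|}\,d^{n}x$ with $F$ a diffeomorphism scalar. The standard fact that ultraviolet divergences are local then guarantees that $F$ is a \emph{local} polynomial in $g_{\mu\nu}$, its inverse, and a finite number of derivatives; re-expressing derivatives of the metric through the Riemann tensor and its covariant derivatives, $F$ becomes a local scalar built from curvature.

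The second step is a power-counting argument to fix the degree of $F$ in the curvature. In dimensional regularization the divergence appears as the $1/(n-4)$ pole exhibited in Eq.~(\ref{qcorr}), and the associated counterterm carries a definite mass dimension. Since the Riemann tensor has mass dimension two and the Einstein--Hilbert term is already linear in curvature, the one-loop correction in four dimensions is of mass dimension four, hence \emph{quadratic} in curvature. Contributions that are constant or linear in curvature renormalize the cosmological constant and Newton's constant and are absorbed into the bare couplings, while terms of cubic or higher order are finite and do not enter the pole.

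The final step, which I expect to be the main obstacle, is to show that the quadratic sector is spanned precisely by the three invariants in Eq.~(\ref{qcorr2}). Using the algebraic symmetries of $R_{\alpha\beta\gamma\delta}$ and the contracted Bianchi identity, one checks that every scalar of mass dimension four built from two curvature factors can be reduced to a linear combination of $R^{\alpha\beta\gamma\delta}R_{\alpha\beta\gamma\delta}$, $R^{\alpha\beta}R_{\alpha\beta}$, and $R^{2}$. The remaining candidates at this dimension involve a single covariant-derivative pair acting on the curvature, such as $\Box R$; these are total divergences and drop out of $\int F\sqrt{|g|}\,d^{n}x$ up to boundary terms, so they make no contribution to $H^{\mu\nu}$. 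Establishing this completeness-and-independence claim carefully --- accounting for all identities so that no independent invariant is missed and none is double counted --- is the delicate part of the argument; once it is in hand, $F$ has exactly the asserted form for an arbitrary metric.
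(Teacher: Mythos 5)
Your argument is correct in substance but follows a genuinely different route from the paper. The paper does not argue from locality and power counting at all: it takes as given that the $F$-term for a conformally flat metric is a known quadratic curvature form $F_{1}$, writes $F=F_{1}+\delta F$ for a general metric with $\delta F$ vanishing in the conformally flat case, and then asserts on dimensional grounds that the only available quantity with these properties is a constant multiple of $H=R^{\alpha\beta\gamma\delta}R_{\alpha\beta\gamma\delta}-2R^{\alpha\beta}R_{\alpha\beta}+\tfrac{1}{3}R^{2}$ (the Weyl-squared combination whose vanishing characterizes conformal flatness), so that $F$ remains quadratic. Your approach instead derives quadraticity from first principles --- diffeomorphism invariance, locality of the one-loop divergence, and mass-dimension counting --- and then invokes the completeness of $\{R^{\alpha\beta\gamma\delta}R_{\alpha\beta\gamma\delta},\,R^{\alpha\beta}R_{\alpha\beta},\,R^{2}\}$ modulo total derivatives such as $\Box R$. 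What your route buys is that it does not presuppose the conclusion: the paper's step ``$\delta F=\lambda H$ since this is the only dimensionally accessible quantity'' already tacitly assumes that $\delta F$ is a polynomial quadratic in curvature, which is precisely what is to be shown, whereas your locality-plus-power-counting argument establishes that independently. What the paper's route buys is brevity and a direct link to the conformally flat results it builds on; it also isolates the single new invariant $H$ that distinguishes the general case from the conformally flat one, which the paper then uses in the subsequent sections. The delicate completeness-and-independence step you flag at the end is indeed the standard classification of dimension-four curvature scalars and is correct as you state it.
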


\begin{proof}
For any metric other than a CF metric, the $F$-term can be written as%
\begin{equation}
F=F_{1}+\delta F, \label{result}%
\end{equation}
where $F_{1}$ is the $F$-term for quantum corrections to a CF metric, which
are quadratic curvature forms, as described above, and
\begin{equation}
\delta F=0
\end{equation}
for a CF metric. A CF metric is characterized by the vanishing of%
\begin{equation}
H=R^{\alpha\beta\gamma\delta}R_{\alpha\beta\gamma\delta}-2R^{\alpha\beta
}R_{\alpha\beta}+\frac{1}{3}R^{2}. \label{cf}%
\end{equation}
Then, $\delta F$ is nothing but%
\begin{equation}
\delta F=\lambda H,
\end{equation}
where $\lambda$ is a non-dimensional constant, since this is the only
dimensionally accessible quantity. In other words, Eq. (\ref{result}) is a sum
of quadratic curvature forms for any metric.
\end{proof}

In the rest of this section, it will be shown that the GB combination is the
only possible solution if Eq. (\ref{qcorr}) is to be finite in the limit as n
approaches 4 for a SSS metric. For that purpose, I will start with Eq.
(\ref{qcorr}) and Eq. (\ref{qcorr2}) for a SSS metric.

\begin{theorem}
The $F$-term for quantum correction of a SSS metric is a GB quadratic combination.
\end{theorem}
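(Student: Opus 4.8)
The plan is to convert the requirement that $H^{\mu\nu}$ in Eq.~(\ref{qcorr}) be finite as $n\to 4$ into an algebraic condition on $(A,B,C)$. Because of the explicit pole $1/(n-4)$, and with the expansion $F=f_{0}+(n-4)f_{1}+O((n-4)^{2})$ of Eq.~(\ref{qcorr2}), finiteness holds precisely when the metric variation of the leading piece carries a compensating zero,
\begin{equation}
\frac{1}{\sqrt{|g|}}\frac{\delta}{\delta g_{\mu\nu}}\int f_{0}\sqrt{|g|}\,d^{n}x = O(n-4).
\end{equation}
For the GB combination (\ref{gb}) this is automatic in \emph{every} dimension: the Gauss--Bonnet density is the four-dimensional Euler class up to a total derivative, so its variation (the Lanczos tensor) vanishes identically at $n=4$ and is proportional to $(n-4)$ in its neighborhood. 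The content of the theorem is that, for the SSS metric (\ref{sss}), GB is the \emph{only} quadratic combination enjoying this property, up to overall normalization.

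First I would substitute the ansatz (\ref{sss}) into the standard $n$-dimensional Riemann, Ricci and scalar curvatures and assemble the three invariants $R^{\alpha\beta\gamma\delta}R_{\alpha\beta\gamma\delta}$, $R^{\alpha\beta}R_{\alpha\beta}$ and $R^{2}$ as explicit functions of $\nu(r)$, $\lambda(r)$ and their derivatives. Multiplying by the measure $\sqrt{|g|}\propto e^{(\nu+\lambda)/2}r^{\,n-2}$ collapses $\int F\sqrt{|g|}\,d^{n}x$ to a purely radial functional of the two profile functions. Varying this reduced functional separately with respect to $\nu$ and $\lambda$ yields two Euler--Lagrange expressions, which play the role of the independent (time--time and radial--radial) components of $H^{\mu\nu}$; the angular components then follow from the contracted Bianchi identity.

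I would then take the $n=4$ limit of these two expressions and demand that each vanish identically, treating $\nu$ and $\lambda$ as \emph{unconstrained} functions. Collecting the coefficients of the independent derivative monomials (built from $\nu'$, $\nu''$, $\lambda'$, and their products) produces a homogeneous linear system in $(A,B,C)$. The crucial input is that a generic SSS metric is \emph{not} conformally flat—the invariant $H$ of Eq.~(\ref{cf}) does not vanish—so, unlike the CF case, no extra identity reduces the number of independent structures; the resulting linear system has rank two in $(A,B,C)$ and forces $(A,B,C)\propto(1,-4,1)$.

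The main obstacle is precisely this uniqueness step. For a conformally flat metric the vanishing of $H$ makes the three curvature invariants linearly dependent, which is exactly why both $(0,1,0)$ and $(0,0,1/3)$ appear finite there; I must show that the two independent profile functions of the SSS ansatz generate enough independent monomials that no analogous degeneracy survives, leaving GB as the unique finite solution. The remaining difficulty is bookkeeping: carrying the curvature algebra in general $n$ and keeping the $(n-4)$ factors explicit, so that the $n=4$ residue is extracted correctly rather than being contaminated by the subleading $f_{1}$ contribution.
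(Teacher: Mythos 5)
Your finiteness criterion --- that the variation of $\int f_{0}\sqrt{|g|}\,d^{n}x$ must itself carry a factor of $(n-4)$ to cancel the explicit pole --- is exactly the one the paper uses, and your reduction of the action to a radial functional of $\nu$ and $\lambda$ through the measure $e^{(\nu+\lambda)/2}r^{\,n-2}$ is also the paper's starting point. The difference is in how each of you certifies whether a given structure survives: you propose to compute the Euler--Lagrange expressions at $n=4$ and demand that they vanish identically for unconstrained $\nu,\lambda$, whereas the paper works dually at the level of the integrand, asking whether each piece of $f_{0}e^{(\nu+\lambda)/2}r^{\,n-2}$ can be rewritten as a total radial derivative plus terms of order $(n-4)$; the two conditions are equivalent, since a Lagrangian in one independent variable has identically vanishing Euler--Lagrange equations iff it is a total derivative. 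The step you flag as the main obstacle --- showing the linear system has rank two rather than collapsing as in the conformally flat case --- is resolved in the paper quite concretely: $f_{0}$ is organized as $(A+B/2+C)g_{1}+(B+4C)g_{2}-4Cg_{3}$, where $g_{3}e^{(\nu+\lambda)/2}r^{\,n-2}$ integrates by parts to something proportional to $(n-4)$ (so $C$ is unconstrained and that piece is shunted into $f_{1}$), while $g_{1}$ contains the derivative-free monomial $4((e^{\lambda}-1)e^{-\lambda}/r^{2})^{2}$ and $g_{2}$ a first-derivative monomial proportional to $(\nu'-\lambda')(e^{\lambda}-1)e^{-2\lambda}/r^{3}$, each shown to be unreachable by partial integration because any such manipulation generates explicit $(n-4)$ factors from differentiating $r^{\,n-4}$. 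Those two obstructions give precisely $A+B/2+C=0$ and $B+4C=0$, hence $(A,B,C)\propto(1,-4,1)$. So your route would work, but the rank-two computation you defer is the entire content of the proof; if you carry it out at the Euler--Lagrange level, keep the power $r^{\,n-2}$ general until the very end, since it is exactly the $(n-4)$ factors produced by differentiating $r^{\,n-4}$ that allow the $g_{3}$ structure (and only that one) to drop out of the constraints.
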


\begin{proof}
From the computation of curvatures in the appendix, one can express the
integrand in Eq. (\ref{qcorr2}), where $f_{0}$ and $f_{1\text{ }}$are
coefficients in the expansion in (n-4) and%
\begin{align}
f_{0}  &  =(A+B/2+C)g_{1}+(B+4C)g_{2}\nonumber\\
&  -4Cg_{3},
\end{align}
where%
\begin{align}
g_{1}  &  =(\nu^{\prime\prime}+\nu^{\prime2}/2-\nu^{\prime}\lambda^{\prime
}/2)^{2}e^{-2\lambda}+2(\nu^{\prime2}+\lambda^{\prime2})(e^{-\lambda}%
/r)^{2}\nonumber\\
&  +4((e^{\lambda}-1)e^{-\lambda}/r^{2})^{2},
\end{align}%
\begin{align}
g_{2}  &  =((\nu^{\prime\prime}+\nu^{\prime2}/2-\nu^{\prime}\lambda^{\prime
}/2)(\nu^{\prime}-\lambda^{\prime})-\nu^{\prime}\lambda^{\prime}%
/r)e^{-2\lambda}/r\nonumber\\
&  +\frac{1}{2}(\nu^{\prime2}+\lambda^{\prime2})(e^{-\lambda}/r)^{2}%
-2(\nu^{\prime}-\lambda^{\prime})((e^{\lambda}-1)/r^{2})e^{-2\lambda}/r,.
\end{align}
and%
\begin{equation}
g_{3}=((\nu^{\prime\prime}+\nu^{\prime2}/2-\nu^{\prime}\lambda^{\prime
}/2)(e^{\lambda}-1)+\nu^{\prime}\lambda^{\prime})e^{-2\lambda}/r^{2}.
\end{equation}
First manipulate the last term. Multiplying by the r dependent term,
\begin{equation}
\sqrt{\left\vert g\right\vert }=e^{(\nu+\lambda)/2}r^{n-2}%
\end{equation}
and noticing that a total derivative does not contribute to a variation, one
can reformulate%
\begin{align}
G  &  =(\nu^{\prime\prime}+\nu^{\prime2}/2)(e^{\lambda}-1)e^{-2\lambda}%
e^{(\nu+\lambda)/2}r^{n-4}\nonumber\\
&  =2(e^{\nu/2})^{\prime\prime}(e^{-\lambda/2}-e^{-3\lambda/2})r^{n-4}%
\nonumber\\
&  =-(2(e^{\nu/2})^{\prime}((e^{-\lambda/2}-e^{-3\lambda/2})r^{n-4})^{\prime
}\nonumber\\
&  =\nu^{\prime}\lambda^{\prime}(e^{\lambda}-3)/2e^{-2\lambda}e^{(\nu
+\lambda)/2}r^{n-4}\nonumber\\
&  -(n-4)\nu^{\prime}(e^{\lambda}-1)e^{-2\lambda}e^{(\nu+\lambda)/2}r^{n-5}%
\end{align}
then%
\begin{align}
&  -4Cg_{3}e^{(\nu+\lambda)/2}r^{n-2}\nonumber\\
&  =4C(n-4)\nu^{\prime}(e^{\lambda}-1)e^{-2\lambda}e^{(\nu+\lambda)/2}r^{n-5},
\label{n-4}%
\end{align}
where all total derivative terms are dropped.

Let us show that the last term of $g_{1}$, which does not contain a
derivative, cannot be eliminated by manipulation of a partial derivative of
$g_{1}e^{(\nu+\lambda)/2}r^{n-2}$. This would be possible if differentiation
of the $r^{m}$ term were to be performed and the derivative order were to be
reduced. However, by doing that a factor of $n-4$ would be multiplied in the
transition from the second derivative to the first derivative (or from a
product of two first order derivatives to a first order derivative).
Therefore, the last term of $g_{1}$ cannot be eliminated by partial
differentiation, since all the terms without derivatives which arise in such a
manner contain n-4 as a factor, and hence vanish for n approaching 4. By the
same argument, $g_{1}$ does not contain terms with a first order derivative.
Likewise, the last term of $g_{2}$ cannot be eliminated by the technique of
partial differentiation.

The last term of $g_{2}e^{(\nu+\lambda)/2}r^{n-2}$ can be split into two
terms. While one of them is reduced to%
\begin{equation}
(\nu^{\prime}-\lambda^{\prime})e^{(\nu-\lambda)/2}r^{n-5}=-2(n-5)e^{(\nu
-\lambda)/2}r^{n-6}%
\end{equation}
by partial differentiation, the other term is expressed as%
\begin{align}
(\nu^{\prime}-\lambda^{\prime})e^{(\nu-3\lambda)/2}r^{n-5}  &  =2\lambda
^{\prime}e^{(\nu-3\lambda)/2}r^{n-5}-2(n-5)e^{(\nu-3\lambda)/2}r^{n-6}%
\label{derivative1}\\
&  =\frac{2}{3}\nu^{\prime}e^{(\nu-3\lambda)/2}r^{n-5}-\frac{2}{3}%
(n-5)e^{(\nu-3\lambda)/2}r^{n-6}. \label{derivative2}%
\end{align}
From a linear combination of the two terms on the right hand side, one gets%
\begin{align}
(\nu^{\prime}-\lambda^{\prime})e^{(\nu-3\lambda)/2}r^{n-5}  &  =\frac
{2}{3(a+b)}((a\nu^{\prime}+3b\lambda^{\prime})e^{(\nu-3\lambda)/2}%
r^{n-5}\nonumber\\
&  -(a+3b)(n-5)e^{(\nu-3\lambda)/2}r^{n-6})\\
&  =(\nu^{\prime}-\lambda^{\prime})e^{(\nu-3\lambda)/2}r^{n-5}\nonumber\\
&  -\frac{a+3b}{3(a+b)}((\nu^{\prime}-3\lambda^{\prime})e^{(\nu-3\lambda
)/2}r^{n-5}+2(n-5)e^{(\nu-3\lambda)/2}r^{n-6})\\
&  =(\nu^{\prime}-\lambda^{\prime})e^{(\nu-3\lambda)/2}r^{n-5}.
\end{align}
This means that the first derivative term in Eqs. (\ref{derivative1}) or
(\ref{derivative2}) cannot be eliminated by partial differentiation. Since the
last term of $g_{2}$ is the only term which contains a first derivative, and
cannot be eliminated, the coefficient of $g_{2}$ must vanish.%
\begin{equation}
B+4C=0
\end{equation}
Then the last term of $g_{1}$ is the only term which contains a non-derivative
term and cannot be eliminated by partial differentiation, so the coefficient
of $g_{1}$ should vanish%
\begin{equation}
A+B/2+C=0
\end{equation}
These results yield%
\begin{equation}
B=-4C,\text{ \ \ and \ \ }A=C
\end{equation}
which is a multiple of the GB condition, Eq. (\ref{gb}). The last term of
$f_{0}$ is proportional to $(n-4)$ as is seen in Eq. (\ref{n-4}). This term
should be added to the quantum correction term $f_{1}$ in the variational
calculation. This proves that quantum corrections for a SSS metric should
consist of a GB combination exclusively, where%
\begin{equation}
f_{0}=0.
\end{equation}
This completes the proof of the Theorem.
\end{proof}

\ \ \ \ \ \ \ \ \ \ \ \ \ \ \ \ \ \ \ \ \ \ \ \ \ \ \ \ \ \ \ \ \ \ \ \ \ \ \ \ \ \ \ \ \ \ \ \ \ \ \ \ \ \ \ \ \ \ \ \ \ \ \ \ \ \ \ \ \ \ \ \ \ \ \ \ \ \ \ \ \ \ \ \ \ \ \ \ \ \ \ \ \ \ \ \ \ \ \ \ \ \ \ \ \ \ \ \ \ \ \ \ \ \ \ \ \ \ \ \ \ \ \ \ \ \ \ \ \ \ \ \ \ \ \ \ \ \ \ \ \ \ \ \ \ \ \ \ \ \ \ \ \ \ \ \ \ \ \ \ \ \ \ \ \ \ \ \ \ \ \ \ \ \ \ \ \ \ \ \ \ \ \ \ \ \ \ \ \ \ \ \ \ \ \ \ \ \ 

\ \ Before proceeding to the computation of $f_{1}$ for the purpose of the
variational calculation, a comment on the relationship between the two
quadratic curvature terms is in order. From the relationship%
\begin{equation}
R^{\alpha\beta}R_{\alpha\beta}-\frac{1}{3}R^{2}=\frac{1}{2}(H-GB),
\end{equation}
the difference between $R^{\alpha\beta}R_{\alpha\beta}$ and $\frac{1}{3}R^{2}$
in a CF metric is $-\frac{1}{2}GB$, instead of zero, as is claimed in a
literature\cite{bd}.

Using expressions from the appendix, the expansion of $F$ in $(n-4)$ yields%
\begin{align}
f_{1}  &  =A((\nu^{\prime2}+\lambda^{\prime2})(e^{-\lambda}/r)^{2}%
+6(e^{\lambda}-1)^{2}(e^{-\lambda}/r)^{2})\nonumber\\
&  +B(\frac{1}{2}(\nu^{\prime\prime}+\nu^{\prime2}/2-\nu^{\prime}%
\lambda^{\prime}/2)(\nu^{\prime}-\lambda^{\prime})e^{-2\lambda}/r\nonumber\\
&  +\frac{5}{4}(\nu^{\prime2}+\lambda^{\prime2})(e^{-\lambda}/r)^{2}%
-\nu^{\prime}\lambda^{\prime}(e^{-\lambda}/r)^{2}/2\nonumber\\
&  -3(\nu^{\prime}-\lambda^{\prime})((e^{\lambda}-1)/r)(e^{-\lambda}%
/r)^{2}\nonumber\\
&  +5((e^{\lambda}-1)/r)^{2}(e^{-\lambda}/r)^{2})\nonumber\\
&  +C(2(\nu^{\prime\prime}+\nu^{\prime2}/2-\nu^{\prime}\lambda^{\prime}%
/2)(\nu^{\prime}-\lambda^{\prime})e^{-2\lambda}/r\nonumber\\
&  +4(\nu^{\prime2}+\lambda^{\prime2})(e^{-\lambda}/r)^{2}-8\nu^{\prime
}\lambda^{\prime}(e^{-\lambda}/r)^{2}\nonumber\\
&  -6(\nu^{\prime\prime}+\nu^{\prime2}/2-\nu^{\prime}\lambda^{\prime
}/2)(e^{\lambda}-1)(e^{-\lambda}/r)^{2}\nonumber\\
&  -16(\nu^{\prime}-\lambda^{\prime})((e^{\lambda}-1)/r)(e^{-\lambda}%
/r)^{2}\nonumber\\
&  +12(e^{\lambda}-1)^{2}(e^{-\lambda}/r^{2})^{2})\nonumber\\
&  +4C\nu^{\prime}((e^{\lambda}-1)/r)(e^{-\lambda}/r)^{2},
\end{align}
where the last term is added from Eq. (\ref{n-4}). Rearranging appropriate
terms, one gets%
\begin{align}
f_{1}  &  =(\nu^{\prime2}+\lambda^{\prime2})(e^{-\lambda}/r)^{2}%
(A+B/4+B+4C)\nonumber\\
&  +(B+4C)g_{4}\nonumber\\
&  -6Cg_{3}\nonumber\\
&  +((e^{\lambda}-1)(e^{-\lambda}/r^{2}))^{2}(6A+5B+12C)\nonumber\\
&  -(\nu^{\prime}-\lambda^{\prime})((e^{\lambda}-1)/r)(e^{-\lambda}%
/r)^{2}(3B+16C)\nonumber\\
&  +4C\nu^{\prime}((e^{\lambda}-1)/r)(e^{-\lambda}/r)^{2},
\end{align}
where%
\begin{equation}
g_{4}=\frac{1}{2}((\nu^{\prime\prime}+\nu^{\prime2}/2-\nu^{\prime}%
\lambda^{\prime}/2)(\nu^{\prime}-\lambda^{\prime})r-\nu^{\prime}%
\lambda^{\prime})(e^{-\lambda}/r)^{2}%
\end{equation}
In this expression, the first two terms vanish by the BG condition, Eq.
(\ref{gb}), while the third term becomes higher order in $(n-4)$ by Eq.
(\ref{n-4}). In the last two terms, the $\nu^{\prime}$ terms cancel and
finally one gets%
\begin{equation}
f_{1}=4\lambda^{\prime}((e^{\lambda}-1)/r)(e^{-\lambda}/r)^{2}-2(e^{\lambda
}-1)^{2}(e^{-\lambda}/r^{2})^{2}.
\end{equation}

Multiplying by $e^{(\nu+\lambda)/2}r^{2}$, variation of%
\begin{equation}
g_{00}=e^{\nu}%
\end{equation}
yields%
\begin{align}
H^{00}  &  =\frac{1}{e^{(\nu+\lambda)/2}r^{2}}\frac{\delta}{\delta(e^{\nu}%
)}(f_{1}e^{(\nu+\lambda)/2}r^{2})\\
&  =\frac{1}{2}e^{-\nu}f_{1}%
\end{align}
or%
\begin{align}
H_{0}^{0}  &  =\frac{1}{2}f_{1}\label{rhs1}\\
&  =2\lambda^{\prime}((e^{\lambda}-1)/r)(e^{-\lambda}/r)^{2}-(e^{\lambda
}-1)^{2}(e^{-\lambda}/r^{2})^{2}.
\end{align}

For the variation of%
\begin{equation}
g_{11}=-e^{\lambda},
\end{equation}

\begin{align}
H^{11}  &  =\frac{1}{e^{(\nu+\lambda)/2}r^{2}}(\frac{\delta}{\delta
(-e^{\lambda})}(f_{1}e^{(\nu+\lambda)/2}r^{2})\nonumber\\
&  -(\frac{\delta}{\delta(-(e^{\lambda})^{\prime})}(f_{1}e^{(\nu+\lambda
)/2}r^{2}))^{\prime})\nonumber\\
&  =\frac{1}{e^{(\nu+\lambda)/2}r^{2}}(4(e^{\lambda})^{\prime}\frac{\frac
{3}{2}e^{-5\lambda/2}-\frac{5}{2}e^{-7\lambda/2}}{r}e^{\nu/2}\nonumber\\
&  +(e^{-\lambda/2}+2e^{-3\lambda/2}-3e^{-5\lambda/2})e^{\nu/2}\nonumber\\
&  +4(\frac{e^{-3\lambda/2}-e^{-5\lambda/2}}{r})e^{\nu/2})^{\prime
})\nonumber\\
&  =\frac{1}{e^{(\nu+\lambda)/2}r^{2}}(2\nu^{\prime}\frac{e^{\lambda}-1}%
{r}e^{-3\lambda}e^{(\nu+\lambda)/2}\nonumber\\
&  +(e^{\lambda}-1)^{2}e^{-3\lambda}e^{(\nu+\lambda)/2}),
\end{align}
and then%
\begin{align}
H_{1}^{1}  &  =-e^{\lambda}H^{11}\label{rhs2}\\
&  =-2\nu^{\prime}((e^{\lambda}-1)/r)(e^{-\lambda}/r)^{2}-(e^{\lambda}%
-1)^{2}(e^{-\lambda}/r^{2})^{2}.
\end{align}
These expressions, Eq. (\ref{rhs1}) and Eq. (\ref{rhs2}), have been derived by
a different method in 1985\cite{cr1},\cite{cr3}, and were used for solving the
Einstein equation. This process will be recapitulated in the next section. The
difference from 1985 is that this time we have shown that this is the exact
solution for a SSS metric.

Finally one can estimate the coefficient of the GB term for a SSS metric. In a
CF metric, the $F$-term is expressed as%
\begin{equation}
\frac{\alpha}{6}R^{2}-\beta GB,
\end{equation}
where\cite{bd}%
\begin{equation}
\alpha=\frac{1}{2880\pi^{2}}(N_{S}+6N_{\nu}-18N_{V}),
\end{equation}
and%
\begin{equation}
\beta=\frac{1}{2880\pi^{2}}(N_{S}+11N_{\nu}+62N_{V}),
\end{equation}
$N_{s}$, $N_{\nu}$ and $N_{V}$ being the numbers of scalar fields,
four-component neutrino fields and vector fields respectively. Then the
$F$-term in a SSS metric becomes%
\begin{align}
\frac{\alpha}{6}R^{2}-\beta GB  &  \rightarrow\frac{\alpha}{6}(R^{2}%
-3R^{\alpha\beta}R_{\alpha\beta}+\frac{3}{2}H)-\beta GB\\
&  =-\kappa GB,
\end{align}
where%
\begin{align}
\kappa &  =\beta-\frac{1}{4}\alpha\\
&  =\frac{1}{2880\pi^{2}}(\frac{3}{4}N_{s}+\frac{19}{2}N_{\nu}+\frac{133}%
{2}N_{V}). \label{kappa}%
\end{align}

Eq. (49) indicates that the quantum corrections for any metric are the GB
combination of quadratic curvature forms, unless it is a CF metric.

\section{Explicit Form of Quantum Corrections for a SSS Metric}

The Einstein equation for a SSS metric, Eq. (\ref{sss}), is expressed%
\begin{equation}
G_{\rho}^{\mu}=R_{\rho}^{\mu}-\delta_{\rho}^{\mu}R=-\frac{l^{2}\kappa}%
{2}H_{\rho}^{\mu},
\end{equation}
\bigskip where\cite{bd}%
\begin{equation}
l^{2}=16\pi G
\end{equation}
and $\kappa$ is given in Eq. (\ref{kappa}). An explicit form of the Einstein
equation\bigskip\ reads%
\begin{align}
&  e^{-\lambda}(\frac{\lambda^{\prime}}{r}-\frac{1}{r^{2}})+\frac{1}{r^{2}%
}\nonumber\\
&  =\frac{\xi}{2}(e^{-2\lambda}(\frac{2\lambda^{\prime}}{r^{3}}+\frac{1}%
{r^{4}})+2e^{-\lambda}(-\frac{\lambda^{\prime}}{r^{3}}-\frac{1}{r^{4}}%
)+\frac{1}{r^{4}}) \label{ee1}%
\end{align}
and%
\begin{align}
&  e^{-\lambda}(\frac{-\nu^{\prime}}{r}-\frac{1}{r^{2}})+\frac{1}{r^{2}%
}\nonumber\\
&  =\frac{\xi}{2}(e^{-2\lambda}(\frac{-2\nu^{\prime}}{r^{3}}+\frac{1}{r^{4}%
})+2e^{-\lambda}(\frac{\nu^{\prime}}{r^{3}}-\frac{1}{r^{4}})+\frac{1}{r^{4}}),
\end{align}
where%
\begin{equation}
\xi=l^{2}\kappa.
\end{equation}
Subtraction of the two Einstein equations yields%
\begin{equation}
\nu^{\prime}+\lambda^{\prime}=0
\end{equation}
\bigskip or%
\begin{equation}
\nu+\lambda=0
\end{equation}
by the boundary condition at%
\begin{equation}
r=\infty.
\end{equation}
Multiplying Eq. (\ref{ee1}) by $r^{2}$,%
\begin{align}
e^{-\lambda}(r\lambda^{\prime}-1)+1  &  =\frac{\xi}{2}(e^{-2\lambda}%
(\frac{2\lambda^{\prime}}{r}+\frac{1}{r^{2}})\nonumber\\
&  +2e^{-\lambda}(-\frac{\lambda^{\prime}}{r}-\frac{1}{r^{2}})+\frac{1}{r^{2}%
}),
\end{align}
which can be expressed as%
\begin{align}
-(re^{-\lambda})^{\prime}+r^{\prime}  &  =\frac{\xi}{2}(-(\frac{e^{-2\lambda}%
}{r})^{\prime}\nonumber\\
&  +2(\frac{e^{-\lambda}}{r})^{\prime}-(\frac{1}{r})^{\prime}),
\end{align}
which can be integrated%
\begin{align}
&  -re^{-\lambda}+r\nonumber\\
&  =\frac{\xi}{2}(-\frac{e^{-2\lambda}}{r}+2\frac{e^{-\lambda}}{r}-\frac{1}%
{r})+K,
\end{align}
or%
\begin{equation}
e^{-2\lambda}-2(1+\frac{r^{2}}{\xi})e^{-\lambda}+1+\frac{2r^{2}}{\xi}%
-\frac{2Kr}{\xi}=0,
\end{equation}
where K stands for an integration constant. The solution of this quadratic
equation is%
\begin{equation}
e^{-\lambda}=e^{\nu}=1+\frac{r^{2}}{\xi}-\sqrt{\frac{r^{4}}{\xi^{2}}%
+\frac{2Kr}{\xi}}. \label{solution}%
\end{equation}
This is the solution\cite{cr1},\cite{cr3} that was obtained by the author in
1985. In the limit, $r\rightarrow\infty$ or $\xi\rightarrow0$, one obtains%
\begin{equation}
e^{\nu}\rightarrow1-\frac{K}{r}+\frac{K^{2}\xi}{2r^{4}}+\cdot\cdot\cdot,
\end{equation}
so that the integration constant K is determined to be%
\begin{equation}
K=2GM.
\end{equation}
The important point is that this is the exact solution for quantum corrections
to a SSS metric. This solution was obtained by the author in August,
1985\cite{cr1}. The same solution was obtained independently by Boulware and
Deser\cite{bou-d} for a string-generated gravity model.

\section{Repulsive Nature of Gravity at Short Distances}

As $r\rightarrow0$, the solution of Eq. (\ref{solution}) becomes%
\begin{equation}
e^{\nu}\rightarrow1-\sqrt{\frac{2Kr}{\xi}}%
\end{equation}
which shows the repulsive nature of quantum corrections to gravity at short
distances. While the outer horizon is at K=2GM, the inner horizen is%
\begin{equation}
\frac{\xi}{2K}.
\end{equation}

If the sign of $\xi$ is reversed, the solution becomes%
\begin{equation}
e^{-\lambda}=e^{\nu}=1-\frac{r^{2}}{\mid\xi\mid}+\sqrt{\frac{r^{4}}{\xi^{2}%
}-\frac{2Kr}{\mid\xi\mid}}%
\end{equation}
The solution is terminated at%
\begin{equation}
r=(2K\shortmid\xi\shortmid)^{1/3},\label{inside}%
\end{equation}
at which point%
\begin{equation}
e^{-\lambda}=e^{\nu}=-(\frac{2K}{\sqrt{\shortmid\xi\shortmid}})^{2/3}.
\end{equation}
In other words, the gravitational potential is attractive up to the radius $r$
in Eq. (\ref{inside}), inside of which there is no solution. This is
equivalent to a repusive core at that radius. Hence, irrespective of the sign
of $\xi$, the quantum corrections make gravity repulsive at short distances.
This is a useful information, since the sign convention in general relativity
is reversed from author to author, as to that of the curvature and/or the
right hand side of the Einstein equation.

As is well known by now, this result is the basis for the author's model of
high energy cosmic ray emission from AGN\cite{cr1}-\cite{cr3}.

\section{Appendix: Curvature in a SSS Metric in n Dimensions}

The extension of the curvature in a SSS metric to n dimensions reads%
\begin{equation}
R_{....01}^{01}=(\frac{\nu^{\prime\prime}}{2}+\frac{\nu^{\prime2}}{4}%
-\frac{\nu^{\prime}\lambda^{\prime}}{4})e^{-\lambda}%
\end{equation}

\begin{align}
R_{....02}^{02}  &  =R_{....03}^{03}=\cdot\cdot\cdot\nonumber\\
&  =R_{....0,n-1}^{0,n-1}=\frac{\nu^{\prime}}{2}\frac{e^{-\lambda}}{r}%
\end{align}

\begin{align}
R_{....12}^{12}  &  =R_{....13}^{13}=\cdot\cdot\cdot\nonumber\\
&  =R_{....1,n-1}^{1,n-1}=-\frac{\lambda^{\prime}}{2}\frac{e^{-\lambda}}{r}%
\end{align}

\begin{align}
R_{....23}^{23}  &  =R_{....24}^{24}=\cdot\cdot\cdot=R_{....2,n-1}%
^{2,n-1}=R_{....34}^{34}\nonumber\\
&  =\cdot\cdot\cdot=R_{....n-2,n-1}^{n-2,n-1}=-(e^{\lambda}-1)\frac
{e^{-\lambda}}{r^{2}}%
\end{align}

\begin{align}
R_{..0}^{0}  &  =R_{....01}^{01}+R_{....02}^{02}+\cdot\cdot\cdot
+R_{....0,n-1}^{0,n-1}\nonumber\\
&  =(\frac{\nu^{\prime\prime}}{2}+\frac{\nu^{\prime2}}{4}-\frac{\nu^{\prime
}\lambda^{\prime}}{4})e^{-\lambda}+(n-2)\frac{\nu^{\prime}}{2}\frac
{e^{-\lambda}}{r}%
\end{align}

\begin{align}
R_{..1}^{1}  &  =R_{....10}^{10}+R_{....12}^{12}+R_{....13}^{13}+\cdot
\cdot\cdot+R_{....1,n-1}^{1,n-1}\nonumber\\
&  =(\frac{\nu^{\prime\prime}}{2}+\frac{\nu^{\prime2}}{4}-\frac{\nu^{\prime
}\lambda^{\prime}}{4})e^{-\lambda}-(n-2)\frac{\lambda^{\prime}}{2}%
\frac{e^{-\lambda}}{r}%
\end{align}

\begin{align}
R_{..2}^{2}  &  =R_{....20}^{20}+R_{....21}^{21}+R_{....23}^{23}+\cdot
\cdot\cdot+R_{....2,n-1}^{2,n-1}\nonumber\\
&  =\frac{\nu^{\prime}-\lambda^{\prime}}{2}\frac{e^{-\lambda}}{r}%
-(n-3)(e^{\lambda}-1)\frac{e^{-\lambda}}{r^{2}}\nonumber\\
&  =R_{..3}^{3}=R_{..4}^{4}=\cdot\cdot\cdot=R_{..n-1}^{n-1}%
\end{align}

\begin{align}
R  &  =R_{..0}^{0}+R_{..1}^{1}+\cdot\cdot\cdot+R_{..n-1}^{n-1}\nonumber\\
&  =(\nu^{\prime\prime}+\frac{\nu^{\prime2}}{2}-\frac{\nu^{\prime}%
\lambda^{\prime}}{2})e^{-\lambda}+(n-2)(\nu^{\prime}-\lambda^{\prime}%
)\frac{e^{-\lambda}}{r}\nonumber\\
&  -(n-2)(n-3)(e^{\lambda}-1)\frac{e^{-\lambda}}{r^{2}}%
\end{align}
For the quadratic curvature form,%
\begin{align}
&  R_{....\gamma\delta}^{\alpha\beta}R_{\alpha\beta}^{....\gamma\delta
}\nonumber\\
&  =4((R_{....01}^{01})^{2}+(R_{....02}^{02})^{2}+\cdot\cdot\cdot
+(R_{....0,n-1}^{0,n-1})^{2}\nonumber\\
&  +(R_{....12}^{12})^{2}+(R_{....13}^{13})^{2}+\cdot\cdot\cdot+(R_{....1,n-1}%
^{1,n-1})^{2}\nonumber\\
&  +(R_{....23}^{23})^{2}+(R_{....24}^{24})^{2}+\cdot\cdot\cdot+(R_{....2,n-1}%
^{2,n-1})^{2}\nonumber\\
&  +(R_{....34}^{34})^{2}+\cdot\cdot\cdot\nonumber\\
&  +\cdot\cdot\cdot\nonumber\\
&  +(R_{....n-2,n-1}^{n-2,n-1})^{2})\nonumber\\
&  =((\nu^{\prime\prime}+\frac{\nu^{\prime2}}{2}-\frac{\nu^{\prime}%
\lambda^{\prime}}{2})e^{-\lambda})^{2}+(n-2)(\nu^{\prime2}+\lambda^{\prime
2})(\frac{e^{-\lambda}}{r})^{2}\nonumber\\
&  +2(n-2)(n-3)((e^{\lambda}-1)\frac{e^{-\lambda}}{r^{2}})^{2}%
\end{align}

\begin{align}
R_{..\beta}^{\alpha}R_{\alpha}^{..\beta}  &  =(R_{..0}^{0})^{2}+(R_{..1}%
^{1})^{2}+\cdot\cdot\cdot(R_{..n-1}^{n-1})^{2}\nonumber\\
&  =\frac{1}{2}((\nu^{\prime\prime}+\frac{\nu^{\prime2}}{2}-\frac{\nu^{\prime
}\lambda^{\prime}}{2})e^{-\lambda})^{2}\nonumber\\
&  +\frac{n-2}{2}(\nu^{\prime\prime}+\frac{\nu^{\prime2}}{2}-\frac{\nu
^{\prime}\lambda^{\prime}}{2})(\nu^{\prime}-\lambda^{\prime})\frac
{e^{-2\lambda}}{r}\nonumber\\
&  +\frac{(n-2)^{2}}{4}(\nu^{\prime2}+\lambda^{\prime2})(\frac{e^{-\lambda}%
}{r})^{2}\nonumber\\
&  +\frac{(n-2)}{4}(\nu^{\prime}-\lambda^{\prime})^{2}(\frac{e^{-\lambda}}%
{r})^{2}\nonumber\\
&  -(n-2)(n-3)(\nu^{\prime}-\lambda^{\prime})\frac{e^{\lambda}-1}{r}%
(\frac{e^{-\lambda}}{r})^{2}\nonumber\\
&  +(n-2)(n-3)^{2}(\frac{e^{\lambda}-1}{r})^{2}(\frac{e^{-\lambda}}{r})^{2}%
\end{align}
and%
\begin{align}
R^{2}  &  =((\nu^{\prime\prime}+\frac{\nu^{\prime2}}{2}-\frac{\nu^{\prime
}\lambda^{\prime}}{2})e^{-\lambda}+(n-2)(\nu^{\prime}-\lambda^{\prime}%
)\frac{e^{-\lambda}}{r}\nonumber\\
&  -(n-2)(n-3)(e^{\lambda}-1)\frac{e^{-\lambda}}{r^{2}})^{2}.
\end{align}

\begin{acknowledgments}
It is a great pleasure to thank David N. Williams for reading the
manuscript.\bigskip
\end{acknowledgments}

\end{document}